\documentclass[11pt]{article}
\usepackage[a4paper, total={6in, 8in}]{geometry}
\usepackage{enumerate}
\usepackage{amsmath}
\usepackage{amssymb,latexsym}
\usepackage{amsthm}
\usepackage{color}
\usepackage{cancel}
\usepackage{graphicx}
\usepackage{cite}

\makeatletter
\usepackage{comment}
\let\wfs@comment@comment\comment
\let\comment\@undefined

\usepackage{changes}
\let\wfs@changes@comment\comment
\let\comment\@undefined

\newcommand\comment{%
    \ifthenelse{\equal{\@currenvir}{comment}}
    {\wfs@comment@comment}
    {\wfs@changes@comment}%
}

\definechangesauthor[name=Daniele, color=red]{DAN}
\definechangesauthor[name=Marco, color=blue]{MARCO}
\usepackage{lineno}
\usepackage{hyperref}
\usepackage{comment}
\usepackage{amscd}

\newtheorem{theorem}{Theorem}[section]

\newtheorem{definition}[theorem]{Definition}
\newtheorem{proposition}[theorem]{Proposition}

\newtheorem{remark}[theorem]{Remark}




\title{On a conjecture on APN permutations}
\author{Daniele Bartoli\thanks{Dipartimento di Matematica e Informatica, Universit\`a degli Studi di Perugia,  Perugia, Italy. daniele.bartoli@unipg.it}\,\, and
Marco Timpanella\thanks{Dipartimento di Matematica e Fisica, Universit\`a degli Studi della Campania ``Luigi Vanvitelli'', Caserta, Italy.
marco.timpanella@unicampania.it}
}
\date{ }

\begin{document}

\maketitle
\begin{abstract}
The single trivariate representation proposed in [C. Beierle, C. Carlet, G. Leander, L. Perrin, A Further Study of Quadratic APN Permutations in
Dimension Nine, arXiv:2104.08008] of the two sporadic quadratic APN permutations in dimension 9 found by Beierle and Leander \cite{Beierle} is further investigated. In particular, using tools from algebraic geometry over finite fields, we prove that such a family does not contain any other APN permutation for larger dimensions.
\end{abstract}

{\bf Keywords:} APN permutations, algebraic varieties, Lang-Weil bound.

\section{Introduction}

Vectorial Boolean functions play an important role in cryptography, as they are one of the key ingredients in the design of secure cryptographic primitives. In order for
these primitives to resist to differential attacks \cite{Shamir}, vectorial Boolean functions with strong properties must be employed. One of these properties has been captured in the definition of APN functions.

\begin{definition}
A function $F:\mathbb{F}_{2}^n\rightarrow \mathbb{F}_{2}^m$, $n,m$ positive integers, is APN if for every $\alpha\in \mathbb{F}_{2}^n$, $\alpha\neq 0$, and $\beta \in \mathbb{F}_{2}^m$, the equation $F(x+\alpha)+F(x)=\beta$ has at most two solutions for $x\in \mathbb{F}_{2}^n$.
\end{definition}

Because of their applications, APN functions have been widely investigated; see for instance \cite{CKM, tanti, MR2648536, Canteaut, BCV} and the survey \cite{Delgado}. In the design of symmetric primitives, APN functions are often required to be permutations.
However, it seems that APN permutations are as rare as they are interesting, and very little is known about them.
Up to CCZ-equivalence, all of the APN permutations known so far belong  to a few families, namely:

\begin{enumerate}
    \item APN monomial functions in odd dimension;
    \item one infinite family of quadratic polynomials in odd dimension \cite{Quadratic};
    \item Dillon’s permutation in dimension $6$ \cite{Dillon};
    \item two sporadic quadratic APN permutations in dimension $9$ \cite{Beierle}.
\end{enumerate}

Note that only the first two items above are actually infinite families. 

The example  in \cite{Dillon} was found to be a particular case of a specific structure called  ``butterfly"; see \cite{Perrin}. Such a structure was later generalized, but unfortunately it was proved in \cite{Canteaut} that it is impossible for a generalized butterfly to be APN unless it operates on 6 bits.

Very recently, the two sporadic quadratic permutations  in dimension $9$ obtained in \cite{Beierle} have been investigated in \cite{Carlet}. A single trivariate representation (up to EA-equivalence) of those two permutations as 
\begin{eqnarray*}
C_u:& \mathbb{F}_{2^m}^3&\rightarrow \mathbb{F}_{2^m}^3\\
& (x,y,z)&\mapsto (x^3 + uy^2z, y^3 + uxz^2, z^3 + ux^2y)
\end{eqnarray*}
was proposed. 

In particular, for $m \geq 3$ being a multiple of 3 and $u \in \mathbb{F}_{2^m}$ not being a $7$-th power, the authors of \cite{Carlet} found that  the differential uniformity of 
$C_u$ is bounded above by $8$. Also, based on numerical experiments, they conjecture that $C_u$ is not APN if $m$ is greater than $3$; see \cite[Conjecture 1]{Carlet}.

In this note, using a connection with algebraic surfaces over finite fields and an estimate on the number of $\mathbb{F}_{2^m}$-rational points related to the Lang-Weil bound, we prove that, when $m \geq 20$ is a multiple of $3$ and  $u\in \mathbb{F}_{2^m}\setminus \{0\}$ is not a $7$-th power, the trivariate function $C_u$ is not APN. 
As by \cite[Remark 4]{Carlet} it was already observed that $C_u$ is not APN for $m\in \{6,9,12,15,18\}$, our result proves the first statement in \cite[Conjecture 1]{Carlet}.

\section{Connection with algebraic surfaces}
In the following we let $q=2^m$, where $m \geq 3$ is a multiple of $3$.
Proving that $C_u$ is not APN is equivalent to showing that the homogeneous system 
\begin{equation}\label{system}
    \begin{cases}
\alpha x^2 + \alpha^2 x + u\gamma y^2 + u\beta^2 z = 0\\
\beta y^2 + \beta^2 y + u\alpha z^2 + u\gamma^2 x = 0\\
\gamma z^2 + \gamma^2 z + u\beta x^2 + u\alpha^2 y = 0\\
    \end{cases}
\end{equation}
has at least $4$ solutions for a certain choice of $(\alpha,\beta,\gamma)\in \mathbb{F}_{q}^3$, $(\alpha,\beta,\gamma)\neq (0,0,0)$; see \cite[Theorem 2]{Carlet}.
As the authors point out  in the proof of  \cite[Theorem 2]{Carlet}, System \eqref{system} has at most $2$ solutions if $\alpha\beta\gamma=0$, and hence in the following we will assume $\alpha\beta\gamma\neq 0$. Also, we will consider $\alpha \neq u^2\beta^3/\gamma^2$.

Note that $(0,0,0)$ and $(\alpha,\beta,\gamma)$ are always solutions of System \eqref{system}.

From the first equation of System \eqref{system}, $$z=\frac{\alpha x^2+\alpha^2 x+u\gamma y^2}{u \beta^2},$$ and thus we obtain
\begin{equation}\label{system2_0}
    \begin{cases}
    r_1(x,y):=\alpha^3x^4 + \alpha^5x^2 + u^2\beta^4\gamma^2x + u^2\alpha \gamma^2y^4 + u\beta^5y^2 + u\beta^6 y=0\\
    r_2(x,y):=\alpha^2\gamma x^4 + (\alpha^4\gamma+u\alpha\beta^2\gamma^2+u^3\beta^5)x^2 + u\alpha^2\beta^2\gamma^2 x\\  \hspace{1.85 cm} +u^2\gamma^3y^4 + u^2\beta^2\gamma^3 y^2 +u^3\alpha^2\beta^4 y=0.\\
    \end{cases}
\end{equation}

Taking the linear combination $\gamma r_1(x,y)+\alpha r_2(x,y)=0$, System \eqref{system2_0} is equivalent to 

\begin{equation}\label{system2}
    \begin{cases}
    \alpha^3x^4 + \alpha^5x^2 + u^2\beta^4\gamma^2x + u^2\alpha \gamma^2y^4 + u\beta^5y^2 + u\beta^6 y=0\\
   (\alpha^2 \gamma^2+ u^2\alpha\beta^3)x^2 + (\alpha^3\gamma^2 + u\beta^2\gamma^3)x + (u\alpha \gamma^3 + \beta^3\gamma)y^2 + (u^2\alpha^3\beta^2+ \beta^4\gamma)y=0.\\
    \end{cases}
\end{equation}

Since  $(\alpha^2 \gamma^2+ u^2\alpha\beta^3)\neq 0$,  
\begin{equation}\label{x2}
x^2=\frac{(\alpha^3\gamma^2 + u\beta^2\gamma^3)x + (u\alpha \gamma^3 + \beta^3\gamma)y^2 + (u^2\alpha^3\beta^2+ \beta^4\gamma)y}{\alpha^2 \gamma^2+ u^2\alpha\beta^3},
\end{equation}
and 

\begin{eqnarray}\nonumber 
x^4&=&\frac{(\alpha^3\gamma^2 + u\beta^2\gamma^3)^2x^2 + (u\alpha \gamma^3 + \beta^3\gamma)^2y^4 + (u^2\alpha^3\beta^2+ \beta^4\gamma)^2y^2}{(\alpha^2 \gamma^2+ u^2\alpha\beta^3)^2}\\
&=&\frac{Ax+ By^4 + Cy^2 + Dy}{(\alpha^2 \gamma^2+ u^2\alpha\beta^3)^3},\label{x4}
\end{eqnarray}
where
\begin{eqnarray*}
A&=&\gamma^6(\alpha^3 + u\beta^2\gamma)^3,\\
B&=&\alpha\gamma^2(\alpha\gamma^2 + u^2\beta^3)(u\alpha\gamma^2 + \beta^3)^2,\\
C&=&(\alpha^4\gamma^2 + u^2\alpha^3\beta^3 + u\alpha\beta^2\gamma^3 + \beta^5\gamma)\cdot\\
&&\cdot(u^4\alpha^4\beta^4 + u\alpha^3\gamma^5+ u^3\alpha^2\beta^3\gamma^3 + \alpha^2\beta^3\gamma^3 + u^2\alpha\beta^6\gamma + u^2\beta^2\gamma^6),\\
D&=&\gamma^4\beta^2(\alpha^3 + u\beta^2\gamma)^2(u^2\alpha^3 + \beta^2\gamma).
\end{eqnarray*}

Using  \eqref{x2} and \eqref{x4} in System \eqref{system2} reads

\begin{equation}\label{system3}
    \begin{cases}
    u^3\beta^6\gamma^2(u\alpha^7 + u^2\alpha^4\beta^2\gamma + u\alpha^2\beta\gamma^4 + u^3\alpha\beta^4\gamma^2 + u^5\beta^7 + \gamma^7)x=Q(y)\\
   (\alpha^2 \gamma^2+ u^2\alpha\beta^3)x^2 + (\alpha^3\gamma^2 + u\beta^2\gamma^3)x + (u\alpha \gamma^3 + \beta^3\gamma)y^2 + (u^2\alpha^3\beta^2+ \beta^4\gamma)y=0,\\
    \end{cases}
\end{equation}
with
\begin{eqnarray*}
Q(y)&=&(u + 1)^2(u^2 + u + 1)^2\alpha\beta^6\gamma^2(\alpha\gamma^2 + u^2\beta^3)y^4+\beta^4(u^4\alpha^8\gamma^2 + u^6\alpha^7\beta^3 + u^5\alpha^5\beta^2\gamma^3 \\
&&+ u^4\alpha^4\beta^5\gamma +
 u\alpha^3\beta\gamma^6 + u^3\alpha^2\beta^4\gamma^4 + \alpha^2\beta^4\gamma^4 +u^5\alpha\beta^7\gamma^2+ u^2\alpha\beta^7\gamma^2 + u^3\alpha\gamma^9\\
 &&+ u^7\beta^{10} + u^2\beta^3\gamma^7)y^2+u\beta^6(u^5\alpha^7\beta^2 + u^3\alpha^4\beta^4\gamma + u^3\alpha^3\gamma^6 + \alpha^3\gamma^6 + u^2\alpha^2\beta^3\gamma^4\\
&&+ u^4\alpha\beta^6\gamma^2 + u^6\beta^9 + u\beta^2\gamma^7)y.
\end{eqnarray*}

\begin{comment}
FF := GF(2^3);
K<u,alpha,beta,gamma> := PolynomialRing(FF,4);
POL := u *alpha^7 + u^2 *alpha^4 *beta^2*gamma + u* alpha^2 *beta*gamma^4 + u^3*alpha*beta^4*gamma^2 + u^5* beta^7 + gamma^7;
Factorization(Evaluate(POL,[u^7,alpha,beta,1]));
Resultant(u^5*beta + u*alpha + 1,u^5*beta + FF.1*u*alpha + FF.1^3,beta);
\end{comment}

Note that $H(\alpha,\beta,\gamma):=u\alpha^7 + u^2\alpha^4\beta^2\gamma + u\alpha^2\beta\gamma^4 + u^3\alpha\beta^4\gamma^2 + u^5\beta^7 + \gamma^7$ factorizes as
\begin{eqnarray*}
(\xi^5\beta + \xi\alpha + \gamma)\cdot
(\xi^5\beta + \eta\xi\alpha + \eta^3\gamma)\cdot
(\xi^5\beta + \eta^2\xi\alpha + \eta^6\gamma)\cdot
(\xi^5\beta + \eta^2\xi\alpha + \eta^2\gamma)\cdot\\
(\xi^5\beta + \eta^4\xi\alpha + \eta^5\gamma)\cdot
(\xi^5\beta + \eta^5\xi\alpha + \eta\gamma)\cdot
(\xi^5\beta + \eta^6\xi\alpha + \eta^4\gamma),
\end{eqnarray*}
where $\mathbb{F}_8^*=\langle\eta\rangle$ and $\xi=\sqrt[7]{u}\notin \mathbb{F}_{q}$, since $u$ is not a $7$-th power by assumption. Thus, there are no $(\alpha,\beta,\gamma) \in \mathbb{F}_q^3$, $(\alpha,\beta,\gamma)\neq (0,0,0)$ such that $H(\alpha,\beta,\gamma)=0$.

Now, after taking the resultant of the two equations of System \eqref{system3} and eliminating $x$, we are left with a polynomial in $y$ of degree $8$, namely
\begin{eqnarray*}
\bar{P}_{\alpha,\beta,\gamma}(y)&=&u\alpha\beta^{10}(\alpha\gamma^2 + u^2\beta^3)^3y(y + \beta)P_{\alpha,\beta,\gamma}(y),
\end{eqnarray*}
where 
\begin{eqnarray}\label{eqvarieta}
P_{\alpha,\beta,\gamma}(y)&=&A_6y^6+A_5y^5+A_4y^4+A_3y^3+A_2y^2+A_1y+A_0,
\end{eqnarray}
and
\begin{eqnarray*}
A_6&=& (u + 1)^4(u^2 + u + 1)^4\alpha^2\beta^4\gamma^4,\\
A_5&=&(u + 1)^4(u^2 + u + 1)^4\alpha^2\beta^5\gamma^4,\\
A_4&=&(u + 1)^4(u^2 + u + 1)^4\alpha^2\beta^6\gamma^4,\\
A_3&=&(u + 1)^4(u^2 + u + 1)^4\alpha^2\beta^7\gamma^4,\\
A_2&=&u^2(u^6\alpha^{14}+ u^2\alpha^8\beta^4\gamma^2 + \alpha^4\beta^2\gamma^8 + u^8\alpha^3\beta^5\gamma^6 + u^2\alpha^3\beta^5\gamma^6 + u^{10}\alpha^2\beta^8\gamma^4\\
&&+ u^7\alpha\beta^4\gamma^9 +u\alpha\beta^4\gamma^9 + u^8\beta^{14}u^8 + u^9\beta^7\gamma^7+ u^3\beta^7\gamma^7 + u^4\gamma^{14}),\\
A_1&=&\beta A_2,\\
A_0&=&u^4(u + 1)(u^2 + u + 1)\alpha^2\beta^3\gamma^4(u\alpha^7 + u^2\alpha^4\beta^2\gamma + u\alpha^2\beta\gamma^4 + u^3\alpha\beta^4\gamma^2 + u^5\beta^7 + \gamma^7).
\end{eqnarray*}
Note that $u^2+u+1\neq 0$. Indeed, if $m$ is odd and $u^2+u+1=0$ then $u\in \mathbb{F}_4$, a contradiction. On the other hand if $m$ is even then $u^3=1$ and, by $3\mid (2^m-1)/7$,  $u^{(2^m-1)/7}=1$, a contradiction to our assumptions on $u$.

\begin{remark}
Observe that this approach also shows that $C_u$ is a differentially $d$-uniform function with $d\leq 8$, as stated in \cite[Theorem 2]{Carlet}. Indeed, the above computations prove that this is true if $\alpha \neq u^2\beta^3/\gamma^2$. On the other hand, assume that $\alpha = u^2\beta^3/\gamma^2$. Then the second equation of System \eqref{system2} reads
$$
u\gamma^2(u^5\beta^7 + \gamma^7)x+\beta(u^3\gamma^7 y^2 + \gamma^7 y^2 + u^8\beta^8y + \beta\gamma^7 y)=0,
$$
where $u^5\beta^7 + \gamma^7\neq 0$ as $u$ is not a $7$-th power. Thus, after eliminating $x$, System \eqref{system2} is equivalent to

\begin{equation*}
\begin{cases}
x=\frac{\beta(u^3\gamma^7 y^2 + \gamma^7 y^2 + u^8\beta^8y + \beta\gamma^7 y)}{u\gamma^2(u^5\beta^7 + \gamma^7)}\\
Q_1(y)=0,
    \end{cases}
\end{equation*}
where 
\begin{eqnarray*}
    Q_1(y)&=& u^{10}\gamma^4\beta^{19} y \cdot (y + \beta)\cdot ( (u + 1)^4(u^2 + u + 1)^4\beta^{10}\gamma^{28}(y^6+\beta y^5+\beta^{2}y^4+\beta^{3}y^3)+ \\
    && u^2(u^5\beta^7 + \gamma^7)^2(u^{10}\beta^{14} + u^5\beta^7\gamma^7 + u^2\beta^7\gamma^7 + \gamma^{14})^2(y^2+\beta y)+\\
    &&u^4(u + 1)(u^2 + u + 1)(u^5\beta^7 + \gamma^7)^3\beta^9\gamma^{14}).
\end{eqnarray*}
As the degree of $Q_1(y)$ is $8$, System \eqref{system2} has at most $8$ solutions if $\alpha = u^2\beta^3/\gamma^2$. Finally, if $\alpha\beta\gamma=0$ System \eqref{system2} has at most $2$ solutions (see the first part of the proof of \cite[Theorem 2]{Carlet}), and hence $C_u$ is  differentially $d$-uniform with $d\leq 8$.
\end{remark}
\begin{comment}
K<x,y,z,a,b,c,u> := PolynomialRing(GF(2),7);

 F1 := a*x^2 + a^2*x + u*c*y^2 + u*b^2*z;

 F2 :=b*y^2 + b^2*y + u*a*z^2 + u*c^2*x;
 F3 :=c*z^2 + c^2*z + u*b*x^2 + u*a^2*y;

R1:=Resultant(F1,F2,z) div u;
R2:=Resultant(F1,F3,z);
R3:=c*R1+a*R2;
pol1:=a*c^2 + b^3*u^2;
//annullo il coeff di x^2

RR3:=Resultant(pol1,R3,a);
CC:=Coefficients(RR3,x);
//poichè u non è potenza settima il coeff di x non si annulla.
RR1:=Resultant(R1,pol1,a);
RR:=Resultant(RR3,RR1,x);
//polinomio di grado 8 e quindi è 8 differentially uniform.
\end{comment}

As a polynomial in the variables $\alpha,\beta,\gamma,y$, $P_{\alpha,\beta,\gamma}(y)$ defines a surface $V$ of degree $16$ embedded in the three-dimensional projective space $\mathrm{PG}(3,\overline{\mathbb{F}}_2)$. Good references for a more comprehensive introduction
to algebraic varieties and curves are \cite{Hartshorne,HKT}. For a survey on the use of algebraic varieties over finite fields in polynomial problems, we refer to \cite{Survey}.

Recall that System \eqref{system} always possesses the solutions $(0,0,0)$ and $(\alpha,\beta,\gamma)$. Therefore, in order to prove that $C_u$ is not APN, it is enough to exhibit at least a choice of $(\alpha,\beta,\gamma)$ for which $V$ has an $\mathbb{F}_q$-rational point not lying on $y=0$ and $y=\beta$; see Theorem \ref{th:main}.
To prove the existence of such a point, we use the following results.

The following is a particular case of \cite[Lemma 2.1]{MR2648536}.
\begin{proposition}\label{criterio2}
Let $H$ be a plane of $\mathrm{PG}(3,\overline{\mathbb{F}}_2)$ such that $V\cap H$ contains
a non-repeated absolutely irreducible component defined over $\mathbb{F}_q$. Then $V$ possesses a non-repeated absolutely irreducible component defined over $\mathbb{F}_q$.
\end{proposition}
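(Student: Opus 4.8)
The plan is to argue by Galois descent on the absolutely irreducible components of $V$. Since the defining polynomial $P_{\alpha,\beta,\gamma}(y)$ has coefficients in $\mathbb{F}_q$ (they are polynomials in $u\in\mathbb{F}_q$), the surface $V$ is defined over $\mathbb{F}_q$, so the $q$-power Frobenius $\Phi$ permutes its absolutely irreducible components and preserves their multiplicities. I would therefore begin by writing the decomposition of $V$ as a formal sum $\sum_i e_i V_i$, where the $V_i$ are the distinct absolutely irreducible components over $\overline{\mathbb{F}}_2$ and $e_i\geq 1$ are their multiplicities in the defining form; a component being \emph{non-repeated} means precisely that its multiplicity is $1$.

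First I would locate the component of $V$ responsible for $C$, where $C$ denotes the given non-repeated absolutely irreducible component of $V\cap H$ defined over $\mathbb{F}_q$. Restricting the defining form of $V$ to $H$ cuts out the divisor $\sum_i e_i\,(V_i\cap H)$ on $H$, so the multiplicity of $C$ in $V\cap H$ equals $\sum_i e_i v_i$, where $v_i$ is the multiplicity of $C$ in $V_i\cap H$. Because $C$ is non-repeated this sum equals $1$; as every $e_i\geq 1$, there is exactly one index $i_0$ with $e_{i_0}=1$ and $v_{i_0}=1$, while $v_i=0$ for all $i\neq i_0$. In particular $C\subseteq V_{i_0}$, the surface $V_{i_0}$ is a non-repeated component of $V$, and it is the \emph{unique} component of $V$ containing $C$.

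Next I would invoke Frobenius invariance to descend $V_{i_0}$ to $\mathbb{F}_q$. Since $C$ is defined over $\mathbb{F}_q$ we have $\Phi(C)=C$; applying $\Phi$ to the inclusion $C\subseteq V_{i_0}$ gives $C=\Phi(C)\subseteq\Phi(V_{i_0})$, and $\Phi(V_{i_0})$ is again an absolutely irreducible component of $V$ of multiplicity $1$ because $V$ is defined over $\mathbb{F}_q$. By the uniqueness established above, $\Phi(V_{i_0})=V_{i_0}$, so $V_{i_0}$ is fixed by the $q$-power Frobenius and hence defined over $\mathbb{F}_q$. Thus $V_{i_0}$ is a non-repeated absolutely irreducible component of $V$ defined over $\mathbb{F}_q$, as required. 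Note that $H$ itself is not assumed to be defined over $\mathbb{F}_q$, and indeed the argument never uses this.

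The only delicate point is the multiplicity bookkeeping in the second step: one must be sure that the hypothesis ``$C$ non-repeated in $V\cap H$'' forces both $e_{i_0}=1$ \emph{and} that no other component of $V$ contains $C$, since it is exactly this uniqueness that powers the descent. Once the intersection is written additively as a divisor over the components of $V$, this follows immediately from the fact that each $e_i\geq 1$, and the statement is then precisely the specialization of \cite[Lemma 2.1]{MR2648536} to a surface in $\mathrm{PG}(3,\overline{\mathbb{F}}_2)$ cut by a plane.
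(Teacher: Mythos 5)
Your proof is correct, but it does not coincide with the paper's treatment for a simple reason: the paper gives no proof at all, stating only that the proposition is a particular case of \cite[Lemma 2.1]{MR2648536} (Aubry--McGuire--Rodier). What you have done, in effect, is reprove that cited lemma in the special case of a surface in $\mathrm{PG}(3,\overline{\mathbb{F}}_2)$ cut by a plane, and your argument is essentially the standard one underlying it: factor the defining form of $V$ as $\prod_i F_i^{e_i}$ over $\overline{\mathbb{F}}_2$, restrict to $H$ to get the divisor identity $F|_H=\prod_i (F_i|_H)^{e_i}$, use the multiplicity-one hypothesis on $C$ to conclude that exactly one component $V_{i_0}$ contains $C$ and that $e_{i_0}=1$, then let the $q$-Frobenius $\Phi$ act: it fixes $C$ and permutes the components of $V$ preserving multiplicities, so uniqueness forces $\Phi(V_{i_0})=V_{i_0}$, and a Frobenius-stable absolutely irreducible hypersurface is defined over $\mathbb{F}_q$ (normalize $F_{i_0}$ so some coefficient equals $1$; then $F_{i_0}^{\Phi}=\lambda F_{i_0}$ forces $\lambda=1$). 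Your identification of the multiplicity bookkeeping as the delicate point is exactly right, and your observation that $H$ need not be defined over $\mathbb{F}_q$ is accurate. One caveat you leave implicit: the divisor decomposition of $V\cap H$ presupposes that $H$ is not itself a component of $V$, otherwise $F|_H\equiv 0$ and the multiplicity of a component of $V\cap H$ is not even defined; this is built into the hypothesis as the paper uses it, and in the application $V\cap H$ is the plane curve $u^8(\alpha^7+u\beta^7)^2y(y+\beta)=0$, so it is harmless. In sum, your version buys self-containedness at the cost of a paragraph, where the paper's citation outsources precisely this argument to the literature.
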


\begin{proposition}\label{Prop:component}
There exists an $\mathbb{F}_q$-rational component of $V$ distinct from $y=0$ and $y=\beta$.
\end{proposition}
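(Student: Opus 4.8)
The plan is to deduce the statement from Proposition~\ref{criterio2}: it is enough to produce one plane $H\subset\mathrm{PG}(3,\overline{\mathbb{F}}_2)$ for which $V\cap H$ contains a non-repeated absolutely irreducible component defined over $\mathbb{F}_q$, and then to rule out that the component of $V$ so obtained is one of the two excluded planes. First I would rewrite the defining polynomial in the more transparent form obtained from the relations $A_5=\beta A_6$, $A_4=\beta^2A_6$, $A_3=\beta^3A_6$ and $A_1=\beta A_2$, namely
\[
P_{\alpha,\beta,\gamma}(y)=A_6\big(y(y+\beta)\big)^3+A_2\,y(y+\beta)+A_0 .
\]
In this shape the ``trivial'' linear factors $y$ and $y+\beta$ become visible as soon as the outer coefficients $A_6$ and $A_0$ are forced to vanish, and both of them carry the factor $\gamma^4$.

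This suggests taking $H$ to be the coordinate plane $\gamma=0$. Restricting kills $A_6$ and $A_0$, leaving
\[
P_{\alpha,\beta,\gamma}(y)\big|_{\gamma=0}=A_2\big|_{\gamma=0}\cdot y(y+\beta)=u^2\big(u^6\alpha^{14}+u^{16}\beta^{14}\big)\,y(y+\beta).
\]
Viewed in $H\cong\mathrm{PG}(2,\overline{\mathbb{F}}_2)$ with coordinates $[\alpha:\beta:y]$, the section is the union of the degree-$14$ locus $A_2|_{\gamma=0}=0$ and of the two lines $y=0$ and $y+\beta=0$, each with multiplicity one. The line $y=0$ is absolutely irreducible, it is defined over $\mathbb{F}_2\subseteq\mathbb{F}_q$, and it is non-repeated, so it is exactly the object required by Proposition~\ref{criterio2}, which then yields a non-repeated absolutely irreducible component $\mathcal{S}$ of $V$ defined over $\mathbb{F}_q$.

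I would emphasise \emph{why} one leans on these apparently trivial lines rather than on the genuinely positive-dimensional part of the sections. The degree-$14$ factor $A_2|_{\gamma=0}$ is a perfect square in characteristic two, whose radical $u^4\alpha^7+u^9\beta^7$ splits into Galois-conjugate lines, none of which is defined over $\mathbb{F}_q$ since $u$ is not a $7$-th power; and the section by $y=0$, equal to $\{A_0=0\}$, consists of the repeated lines $\alpha=0,\beta=0,\gamma=0$ together with the seven conjugate lines into which $H(\alpha,\beta,\gamma)$ factors over $\mathbb{F}_q(\sqrt[7]{u})$, again with no $\mathbb{F}_q$-rational component available. Thus the only readily usable non-repeated $\mathbb{F}_q$-component in these sections is one of $y=0$, $y+\beta=0$; a plane such as $\beta=0$, on which $y$ acquires multiplicity two, would be useless. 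With $H:\gamma=0$ this obstruction is avoided, and I expect this choice of plane to be the one genuinely delicate point, the absolute-irreducibility step itself being absorbed by Proposition~\ref{criterio2}.

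It remains to check that $\mathcal{S}$ differs from the planes $y=0$ and $y=\beta$. Here the factored form is used once more: setting $y=0$ gives $P_{\alpha,\beta,\gamma}(0)=A_0$, and setting $y=\beta$ makes $y(y+\beta)$ vanish, so again $P_{\alpha,\beta,\gamma}(\beta)=A_0$. Since $A_0=u^4(u+1)(u^2+u+1)\alpha^2\beta^3\gamma^4\,H(\alpha,\beta,\gamma)$ is not the zero polynomial---using $u\neq 1$, $u^2+u+1\neq0$ and $H\not\equiv0$---neither $y=0$ nor $y=\beta$ is an irreducible component of $V$. Consequently every component of $V$, and in particular the component $\mathcal{S}$ furnished above, is distinct from both planes, which is precisely the assertion.
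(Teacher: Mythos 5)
Your proof is correct and follows essentially the same route as the paper: intersect $V$ with the plane $\gamma=0$, observe that the line $y=0$ is a non-repeated, absolutely irreducible, $\mathbb{F}_q$-rational component of the section, invoke Proposition~\ref{criterio2}, and exclude $y=0$ and $y=\beta$ as components of $V$ by noting $P_{\alpha,\beta,\gamma}(0)=P_{\alpha,\beta,\gamma}(\beta)=A_0\not\equiv 0$. The only discrepancy is harmless: your coefficient $u^{16}\beta^{14}$ inherits a typo from the paper's displayed $A_2$ (the term $u^8\beta^{14}u^8$ should read $u^8\beta^{14}$, so the section is $u^8(\alpha^7+u\beta^7)^2\,y(y+\beta)$ as in the paper's proof), but since the degree-$14$ factor is independent of $y$ and nonzero in either reading, your argument is unaffected.
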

\proof
First note that $y=0$ and $y=\beta$ are not components of $V$:  it is enough to observe that $P_{\alpha,\beta,\gamma}(0)$ and $P_{\alpha,\beta,\gamma}(\beta)$ (seen as polynomials in $\alpha,\beta,\gamma,y$) are not the zero polynomial (this is readily seen by a direct computation).

Consider now the curve $\mathcal{C}$ defined as the intersection of $V$ with the plane of equation $\gamma=0$. By direct computation, this curve has homogeneous equation
$$
u^8(\alpha^7 + u\beta^7)^2 y (y + \beta)=0.
$$
The component $y=0$ is  $\mathbb{F}_q$-rational, absolutely irreducible, and non-repeated.  Then Proposition \ref{criterio2} yields the existence of an $\mathbb{F}_q$-rational component of $V$ through the line $y=0=\gamma$, which is therefore distinct from both $y=0$ and $y=\beta$. 
\endproof

To ensure the existence of a suitable $\mathbb{F}_q$-rational point of $V$, we report the following result.
\begin{theorem}\cite[Theorem 7.1]{MR2206396}\label{Th:CafureMatera}
Let $\mathcal{V}\subset\mathrm{AG}(n,\mathbb{F}_q)$ be an absolutely irreducible variety defined over $\mathbb{F}_q$ of dimension $r>0$ and degree $\delta$. If $q>2(r+1)\delta^2$, then the following estimate holds:
$$|\#(\mathcal{V}\cap \mathrm{AG}(n,\mathbb{F}_q))-q^r|\leq (\delta-1)(\delta-2)q^{r-1/2}+5\delta^{13/3} q^{r-1}.$$
\end{theorem}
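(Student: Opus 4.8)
The plan is to prove this effective estimate by induction on the dimension $r$, reducing higher-dimensional varieties to curves through generic hyperplane sections, with the Hasse--Weil bound furnishing the base case. The guiding principle is that hyperplane section does not increase the degree, so the degree parameter $\delta$ — and hence the coefficients of both error terms — is preserved throughout the induction.

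First I would settle the base case $r=1$, where $\mathcal{V}$ is an absolutely irreducible affine curve of degree $\delta$. Passing to the projective closure $\overline{\mathcal{V}}$ and then to its normalization $\tilde{\mathcal{V}}$, the genus obeys $g\leq(\delta-1)(\delta-2)/2$ by the genus--degree inequality for possibly singular plane curves, so Hasse--Weil gives $|\#\tilde{\mathcal{V}}(\mathbb{F}_q)-(q+1)|\leq 2g\sqrt{q}\leq(\delta-1)(\delta-2)\sqrt{q}$. To descend from the smooth model to the affine curve I would subtract the points at infinity (at most $\delta$) and the discrepancy at the singular points (whose number is $O(\delta^2)$); these corrections are of order $q^{r-1}=1$ and are absorbed into the term $5\delta^{13/3}q^{r-1}$. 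This produces the estimate for $r=1$ with the coefficient $(\delta-1)(\delta-2)$ appearing exactly.

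For the inductive step with $r\geq 2$ I would fiber $\mathcal{V}$ over the affine line by a generic linear form $\ell$, writing $\#(\mathcal{V}\cap\mathrm{AG}(n,\mathbb{F}_q))=\sum_{t\in\mathbb{F}_q}\#(\mathcal{V}_t\cap\mathrm{AG}(n,\mathbb{F}_q))$ with $\mathcal{V}_t=\mathcal{V}\cap\{\ell=t\}$ of dimension $r-1$ and degree at most $\delta$. The crux is an effective Bertini irreducibility theorem: for all but a number of values $t$ bounded polynomially in $\delta$, the slice $\mathcal{V}_t$ is absolutely irreducible of dimension $r-1$, so the inductive hypothesis contributes $q^{r-1}+(\delta-1)(\delta-2)q^{r-3/2}+5\delta^{13/3}q^{r-2}$ per good fiber. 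Summing the leading terms over the $q$ fibers reproduces $q^r$; summing the $\sqrt{q}$-errors gives exactly $(\delta-1)(\delta-2)q^{r-1/2}$, with the coefficient undegraded precisely because the degree bound $\delta$ is inherited by sections; and summing the third terms gives $5\delta^{13/3}q^{r-1}$, into which I would also collect the exceptional fibers, each crudely bounded by $\delta q^{r-1}$ points.

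The main obstacle is the effective Bertini step: one must show that the locus of linear forms $\ell$ whose level slices fail to be absolutely irreducible of dimension $r-1$ is cut out by equations of controlled degree, so that the count of bad fibers over $\mathbb{F}_q$ is genuinely $O(\delta^2)$ with explicit constants. Achieving the precise exponent $13/3$ and the constant $5$ in the second error term — rather than a softer $O(\delta^{C})$ bound — requires tracking these degrees carefully through the induction and combining them with the singular-locus corrections from the base case. The hypothesis $q>2(r+1)\delta^2$ is exactly what guarantees that enough good fibers exist for the induction to close.
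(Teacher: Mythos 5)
This statement is not proven in the paper at all: it is quoted verbatim from Cafure and Matera \cite{MR2206396} as an external black-box result, so there is no internal proof to compare yours against. Judged on its own merits, your sketch does follow the right general family of techniques --- induction on dimension, effective Bertini for linear sections, Weil's bound for curves as the base case --- which is broadly how Cafure and Matera proceed; their actual argument first treats \emph{hypersurfaces} by such an induction and then reduces an arbitrary variety to a hypersurface via a birational generic projection into $(r+1)$-dimensional affine space, a reduction step your plan omits entirely (your fibers $\mathcal{V}_t$ live in $\mathrm{AG}(n,\mathbb{F}_q)$ with $n$ arbitrary, so the base case you prove for plane curves does not directly apply without such a projection).

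The more serious, genuine gap is that the inductive step as you describe it does not close: the error-term arithmetic fails. If each good fiber satisfies the inductive estimate with the \emph{same} constants, then summing the second-order errors $5\delta^{13/3}q^{r-2}$ over the (up to $q$) good fibers already exhausts the entire budget $5\delta^{13/3}q^{r-1}$ of the target inequality. The bad fibers then contribute strictly more: each bad value $t$ both removes $q^{r-1}$ from the main-term count and can contribute up to roughly $\delta q^{r-1}$ points, so with $O(\delta^2)$ bad fibers you incur an additional error of order $\delta^{3}q^{r-1}$ on top of an already saturated bound. An induction carrying a fixed constant through every dimension therefore cannot yield the stated inequality; one needs either a strictly stronger inductive hypothesis (constants with slack that telescope across dimensions) or the Cafure--Matera architecture, in which the exponent $13/3$ and the factor $5$ emerge from explicit bookkeeping combined with Kaltofen-type effective Bertini bounds. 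Relatedly, the two steps you yourself label as ``the main obstacle'' --- the explicit count of bad sections and the derivation of the constant $5\delta^{13/3}$ --- are precisely the content of the theorem, so flagging them as work to be done means the proposal is a plan rather than a proof.
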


\begin{theorem}\label{th:main}
If $m\geq20$, $C_u$ is not APN.
\end{theorem}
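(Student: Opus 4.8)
The plan is to reduce the statement to the existence of a single $\mathbb{F}_q$-rational point on the surface $V$ lying off a controlled family of ``bad'' loci, and then to produce such a point via the Lang--Weil type estimate of Theorem \ref{Th:CafureMatera}. First I would record the crucial linearity observation: for fixed $(\alpha,\beta,\gamma)$ the left-hand sides of System \eqref{system} are $\mathbb{F}_2$-linear in $(x,y,z)$, since $t\mapsto t^2$ is additive in characteristic $2$. Hence the solution set of \eqref{system} is an $\mathbb{F}_2$-subspace of $\mathbb{F}_q^3$; as it always contains $(0,0,0)$ and $(\alpha,\beta,\gamma)$, exhibiting even one further solution forces its $\mathbb{F}_2$-dimension to be at least $2$, hence at least $4$ elements, which is precisely the failure of the APN property. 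By the computations of Section~2, under the running assumptions $\alpha\beta\gamma\neq0$, $\alpha\neq u^2\beta^3/\gamma^2$ and $\alpha^2\gamma^2+u^2\alpha\beta^3\neq0$, and using that $H(\alpha,\beta,\gamma)\neq0$ for every nonzero $(\alpha,\beta,\gamma)$, a root $y_0\in\mathbb{F}_q$ of $P_{\alpha,\beta,\gamma}$ with $y_0\notin\{0,\beta\}$ determines $x_0$ uniquely from the first (linear) equation of System \eqref{system3} and then $z_0$ from the first equation of \eqref{system}, producing a solution $(x_0,y_0,z_0)$ distinct from the two trivial ones. Thus it suffices to find an $\mathbb{F}_q$-rational point $(\alpha,\beta,\gamma,y_0)$ of $V$ avoiding the bad loci $\beta=0$, $\alpha=0$, $\gamma=0$, $y=0$, $y=\beta$ and $\alpha\gamma^2+u^2\beta^3=0$ (the last two reformulations of $\alpha^2\gamma^2+u^2\alpha\beta^3=0$ together with $\alpha=0$).

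Next I would invoke Proposition \ref{Prop:component} to fix an absolutely irreducible component $W$ of $V$ defined over $\mathbb{F}_q$ and distinct from $y=0$ and $y=\beta$. Since $V$ is a hypersurface of $\PG(3,\overline{\mathbb{F}}_2)$ of degree $16$, the component $W$ has dimension $2$ and degree $\delta\leq16$. To apply Theorem \ref{Th:CafureMatera}, which is affine, I would pass to the chart $\beta\neq0$ (legitimate because $W$ contains points of the line $y=0=\gamma$ with $\beta\neq0$, so $W$ is not contained in $\beta=0$); this yields an absolutely irreducible affine surface $W^{\mathrm{aff}}\subset\AG(3,\mathbb{F}_q)$ of dimension $r=2$ and degree $\delta\leq16$, and automatically enforces $\beta\neq0$. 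The hypothesis $q>2(r+1)\delta^2=6\delta^2$ holds, since $6\cdot16^2=1536<2^{20}\leq q$ whenever $m\geq20$. Theorem \ref{Th:CafureMatera} then gives
\begin{equation*}
\#W^{\mathrm{aff}}(\mathbb{F}_q)\geq q^2-(\delta-1)(\delta-2)q^{3/2}-5\delta^{13/3}q.
\end{equation*}

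It then remains to discard the bad points. Each remaining bad condition cuts $W$ along a proper closed subset: because $W$ is absolutely irreducible of dimension $2$ and is not contained in any of the planes $\alpha=0$, $\gamma=0$, $y=0$, $y=\beta$ nor in the cubic cone $\alpha\gamma^2+u^2\beta^3=0$ (which one checks from the explicit coefficients $A_0,\dots,A_6$: $P$ is divisible by neither $\alpha$ nor $\gamma$, $W\neq\{y=0\},\{y=\beta\}$ by Proposition \ref{Prop:component}, and the line $y=0=\gamma\subset W$ meets the cone only where $\beta=0$), these intersections are curves on $W$ of degree $O(\delta)$. A curve of degree $d$ has at most $dq$ points over $\mathbb{F}_q$, so the bad points number at most $c\,q$ in total for an absolute constant $c$. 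Consequently the count of admissible $\mathbb{F}_q$-points of $W$ is at least
\begin{equation*}
q^2-(\delta-1)(\delta-2)q^{3/2}-5\delta^{13/3}q-c\,q ,
\end{equation*}
and for $\delta\leq16$ and $m\geq20$ (so $q\geq2^{20}$) the dominant term $q^2$ strictly exceeds the sum of the remaining ones, making this count positive. Any such point yields the sought extra solution of System \eqref{system}, so $C_u$ is not APN.

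The main obstacle I expect is twofold. First, one must be certain that the component $W$ supplied by Proposition \ref{Prop:component} is genuinely $2$-dimensional, absolutely irreducible, and, crucially, not contained in any bad locus, for otherwise the subtracted sets would themselves be surfaces rather than curves and the whole estimate would collapse; this is exactly where the ``non-repeated absolutely irreducible'' hypothesis of Proposition \ref{criterio2} and the explicit shape of the $A_i$ are needed. Second, the numerics are genuinely tight: with $\delta$ as large as $16$ the term $5\delta^{13/3}q$ has size roughly $8.3\times10^{5}\,q$, so dividing the required inequality by $q$ gives $q>( \delta-1)(\delta-2)\sqrt q+5\delta^{13/3}+c$, which at $q=2^{20}$ reads approximately $1.05\times10^{6}>1.04\times10^{6}$ and only just holds. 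This is precisely what pins the threshold $m\geq20$; any sharpening showing that $W$ has degree appreciably smaller than $16$ would relax the bound considerably.
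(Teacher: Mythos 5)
Your proposal is correct and follows essentially the same route as the paper: extract an absolutely irreducible $\mathbb{F}_q$-rational component $W$ of $V$ via Proposition \ref{Prop:component}, count its points with the Cafure--Matera estimate (Theorem \ref{Th:CafureMatera}), and discard the points on the bad loci ($\alpha\beta\gamma y(y+\beta)=0$ and $\alpha\gamma^2+u^2\beta^3=0$), which lie on curves carrying only $O(q)$ points, so that a good point survives for $q\geq 2^{20}$. The only differences are cosmetic: you dehomogenize at $\beta$ rather than at $\gamma$, you bound the bad curves by B\'ezout plus a generic constant where the paper uses the explicit degree-44 curve and three lines, and you spell out the $\mathbb{F}_2$-linearity argument (one extra solution forces four) and the non-containment of $W$ in the bad loci, which the paper leaves implicit.
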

\begin{proof}
By Proposition \ref{Prop:component}, the surface $V$ contains an absolutely irreducible component $W$ defined over $\mathbb{F}_q$ of degree at most  $16$.

Since $m\geq 20$ the surface $W$ contains at least $48q$ $\mathbb{F}_q$-rational points with $\gamma=1$ (it is enough to apply Theorem \ref{Th:CafureMatera} to the dehomogenization $W_*$ of $W$ with respect to $\gamma$). 

Affine $\mathbb{F}_q$-rational points of $V_*$ lying on $\alpha\beta y(y+\beta)=0$ are contained in the three lines 
$$\alpha=0=y, \quad \alpha=0=y+\beta, \quad \beta=0=y.$$
The intersection between $V$ and $\alpha+u^2\beta^3=0$ is a degree-44 curve which has at most $44q+1$ $\mathbb{F}_q$-rational points; see \cite{HK}. 
Therefore there exists at least an $\mathbb{F}_q$-rational point $(\overline{\alpha},\overline{\beta},1,\overline{y})$ of $V_*$ with $\overline{\alpha}\overline{\beta}\overline{y}(\overline{y}+\overline{\beta})(\overline{\alpha}+u^2\overline{\beta}^3)\neq 0$. This provides a root $\overline{y}\notin \{0,\overline{\beta}\}$ of $P_{\overline{\alpha},\overline{\beta},1}(y)$ (see \eqref{eqvarieta}),  and so $C_u$ is not APN.
\end{proof}

\section*{Acknowledgements}
The research of D. Bartoli and M. Timpanella was partially supported by the Italian National Group for
Algebraic and Geometric Structures and their Applications (GNSAGA - INdAM).
The authors  are grateful to C. Beierle, C. Carlet, G. Leander, and L. Perrin for a number
of valuable comments on an earlier draft.

\end{document}